\newcommand{\cmark}{\ding{51}}   % ✓
\newcommand{\xmark}{\ding{55}}   % ✗
\newtheorem{theorem}{Theorem}
\newtheorem{lemma}{Lemma}
\theoremstyle{remark}
\setlist[itemize,enumerate]{nosep,         % no surrounding space
                            itemsep=0.5ex, % space *between* items
                            leftmargin=*}  % no hanging indent
\g@addto@macro\normalsize{%
    \abovedisplayskip 1pt plus 1pt minus 1pt%
    \abovedisplayshortskip 1pt plus 1pt minus 1pt%
    \belowdisplayskip 1pt plus 1pt minus 1pt%
    \belowdisplayshortskip 1pt plus 1pt minus 1pt%
}
\def\BibTeX{{\rm B\kern-.05em{\sc i\kern-.025em b}\kern-.08em
    T\kern-.1667em\lower.7ex\hbox{E}\kern-.125emX}}
\title{Scalable Hardware Maturity Probe for Quantum Accelerators via Harmonic Analysis of QAOA}
\author{
\IEEEauthorblockN{
Chinonso Onah\IEEEauthorrefmark{1}\IEEEauthorrefmark{2} and
Kristel Michielsen\IEEEauthorrefmark{2}\IEEEauthorrefmark{3}
\thanks{Corr. author: \texttt{chinonso.calistus.onah@volkswagen.de.}}
}

\IEEEauthorblockA{\IEEEauthorrefmark{1}Volkswagen Group, Wolfsburg, Germany}
\IEEEauthorblockA{\IEEEauthorrefmark{2}Department of Physics, RWTH Aachen University, Aachen, Germany}
\IEEEauthorblockA{\IEEEauthorrefmark{3}Jülich Supercomputing Centre, Institute for Advanced Simulation, Forschungszentrum Jülich, Jülich, Germany}

}
\begin{document}

\maketitle
% \IEEEpubidadjcol

% ---------- Copy from here ----------
\begin{abstract}
As quantum processors begin operating as tightly coupled accelerators inside high-performance computing (HPC) facilities, dependable and reproducible behaviour becomes a gating requirement for scientific and industrial workloads.  We present a \emph{hardware-maturity probe} that quantifies a device’s reliability by testing whether it can repeatedly reproduce the \emph{provably global} optima of single-layer Quantum Approximate Optimization Algorithm (QAOA) circuits. Using harmonic-analysis techniques, we derive closed-form upper bounds on the number of stationary points in the \(p{=}1\) QAOA cost landscape for broad classes of combinatorial-optimization problems. The bounds yield an exhaustive yet low-overhead grid-sampling scheme whose outcomes are analytically verifiable.  The probe integrates reliability-engineering notions—run-to-failure statistics, confidence-interval estimation, and reproducibility testing—into a single, application-centric benchmark. By linking analytic guarantees to empirical performance, our framework supplies a standardised dependability metric for hybrid quantum–HPC (QHPC) workflows.
\end{abstract}

\begin{IEEEkeywords}
Dependability metrics, Reproducibility, QAOA benchmarking, Hardware maturity, Quantum-HPC
\end{IEEEkeywords}

% ---------- Updated TikZ figure: green “Start”, red “Stop” ----------
\begin{figure}[!htb]
\centering
\begin{tikzpicture}[node distance=6mm, every node/.style={font=\small}]
  % -- styles -----------------------------------------------------
  \tikzstyle{box}       = [rectangle, rounded corners, minimum width=3.0cm,
                            draw=black, thick, fill=gray!10, align=center]
  \tikzstyle{startnode} = [box, fill=green!20]   % green start
  \tikzstyle{endnode}   = [box, fill=red!20]     % red stop
  \tikzstyle{arrow}     = [->, thick, >=latex]

  % -- nodes ------------------------------------------------------
  \node[startnode] (start) {Start};

  \node[box, below=of start] (initN)
        {\textbf{Step 1: Initialise}\\[-2pt]
         Set problem size $N=N_{\min}$};

  \node[box, below=of initN] (angles)
        {\textbf{Step 2: Classical Solve}\\[-2pt]
         Compute $(\beta^\star,\gamma^\star)$ and\\
         ideal energy $E_{\text{ideal}}$};

  \node[box, below=of angles] (run)
        {\textbf{Step 3: Execute QAOA}\\[-2pt]
         Prepare circuit, sample $M$ shots\\
         (repeat $R$ times)};

  \node[box, below=of run] (compare)
        {\textbf{Step 4: Evaluate Metrics}\\[-2pt]
         • \textit{Compound error},
         $C = N_{2Q}\,\overline{\epsilon}$\\
         • \textit{Energy gap},
         $\Delta E = \lvert E_{\text{ideal}}-\!E_{\text{meas}}\rvert$};

  \node[box, below=of compare] (decide)
        {\textbf{Step 5: Decision}\\[-2pt]
         Stop if $C>1$ \textbf{or}\\
         $\Delta E>\tau_{E}$};

  \node[box, right=22mm of decide] (increase)
        {\textbf{Step 6: Enlarge $N$}\\[-2pt]
         $N \leftarrow N + \Delta N$};

  \node[endnode, below=of decide] (end) {Stop};

  % -- arrows -----------------------------------------------------
  \draw[arrow] (start)   -- (initN);
  \draw[arrow] (initN)   -- (angles);
  \draw[arrow] (angles)  -- (run);
  \draw[arrow] (run)     -- (compare);
  \draw[arrow] (compare) -- (decide);
  \draw[arrow] (decide)  -- node[above,sloped]{\small Yes}(end);
  \draw[arrow] (decide)  -- node[above]{\small else}(increase);
  \draw[arrow] (increase) |- (angles);
\end{tikzpicture}
\caption{\emph{Run-to-failure} workflow driven by compound-error
and energy-gap metrics.  The green box marks the starting point; the red
box indicates termination once $C>1$ or $\Delta E>\tau_{E}$.}
\label{fig:run_to_failure}
\end{figure}
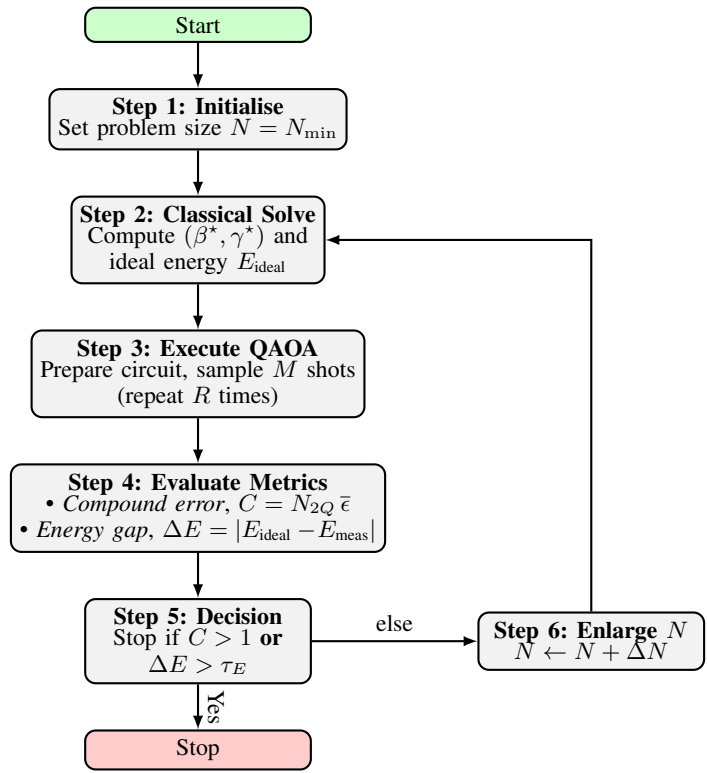
% -----------------------------------------------------------------

\section{Introduction}\label{sec:introduction}

Current Noisy Intermediate-Scale Quantum (NISQ) processors feature limited qubit counts and error rates well above fault-tolerance thresholds \cite{preskill2018quantum}.  Gate-level benchmarks like randomised benchmarking, Quantum Volume, EPLG, CLOPS, \emph{etc.} \cite{magesan2011scalable,EPLG21,Clops20}, characterise individual quantum operations, yet they offer little insight into whether a device can solve a \emph{specific} workload before noise overwhelms the answer.   Application oriented benchmark suites such as proposed in Refs.  \cite{Osaba_2023,MontanezBarrera2024b,Michielsen_2017,Willsch_2020,lubinski2023app, tomesh2022}, — use synthetic kernels or rely on classical co-simulation of problems. In addition, a scalable, problem-centred dependability metric is still missing.  This work proposes a single-layer QAOA probe that closes this gap. It directly embeds the \emph{user’s} problem graph, admits closed-form optima, and remains scalable even beyond 100 qubits, thereby filling the gap between device-level and domain-agnostic benchmarks.

As is well known, for a single-layer Quantum Approximate Optimisation Algorithm (QAOA) \cite{farhi2014quantum}, closed-form optima for the energy expectation vlaues exist for many combinatorial optimization problems. One can therefore bypass costly classical parameter optimisation by running the QAOA circuit with pre-computed angles.  Expectation values measured on the quantum hardwares or computed from bitstrings sampled from the quantum hardwares could be directly compared to analytic expectation values; furnishing a scalable, problem specific hardware probe.  This work defines two  task-level metrics to drive application specific hardware probe (Fig.~\ref{fig:run_to_failure}):

\begin{enumerate}[label=\arabic*)]
  \item \textbf{Compound error} $C = N_{2\mathrm{Q}}\,\overline{\epsilon}$, the product of the two-qubit-gate count and mean gate error.  Reliable operation demands $C < 1$.
  \item \textbf{Energy gap} $\Delta E = |E_{\text{ideal}}-E_{\text{meas}}|$, which flags when noise obscures the analytically optimal cost.
\end{enumerate}

If either $C>1$ or $\Delta E>\tau_{E}$ the benchmark halts; otherwise the instance size is increased.  The probe therefore links low-level noise to application-level fidelity while scaling to $>\!100$ qubits. This procedure constitutes a run-to-failure loop \cite{OConnor2012,Modarres2009} depicted in Figure~\ref{fig:run_to_failure} and remains scalable as long as the circuit initialization parameters and the circuit depth remain efficiently calculable. Our proposal therefore provides a scalable, problem-specific dependability metric for hybrid integration of quantum and high performance computing (QHPC) workflows.

% \begin{itemize}
%   \item A task-centric benchmark that stresses the full quantum-classical path and eliminates exponential classical simulation.
%   \item Analytic bounds for $p{=}1$ QAOA that enable pre-computed parameters and closed-form reference energies.
%   \item Experimental validation on superconducting and trapped-ion devices, showing how $C$ and $\Delta E$ capture fault modes and hardware evolution.
% \end{itemize}

% Our QAOA probe bridges this gap by (i) embedding the
% user’s graph in the cost Hamiltonian, (ii) admitting closed-form
% $p{=}1$ optima, and (iii) scaling beyond 100 qubits today.

Table~\ref{tab:benchmark_comparison} situates our proposal between gate-level and domain-agnostic suites, providing the first scalable, problem-specific dependability metric for hybrid QHPC workflows.

\begin{table}[ht]
\centering
\small
\resizebox{\columnwidth}{!}{%
\begin{tabular}{lcc}
\toprule
\textbf{Benchmark / Metric} & \textbf{Problem–specific?} & \textbf{Scales $>\!50$ qubits?}\\
\midrule
Randomised Benchmarking\,\cite{magesan2011scalable} & \textcolor{red}{\xmark} & \textcolor{green}{\cmark}\\
App. Oriented\,\cite{lubinski2023app,tomesh2022}                  & \textcolor{green}{\cmark} & \textcolor{red}{\xmark}\\
Monta\~nez-Barrera \emph{et al.}\,\cite{MontanezBarrera2024b} & \textcolor{green}{\cmark} & \textcolor{red}{\xmark}\\
Michielsen \emph{et al.}\,\cite{Michielsen_2017}    & \textcolor{green}{\cmark} & \textcolor{red}{\xmark}\\
Willsch \emph{et al.}\,\cite{Willsch_2020}          & \textcolor{green}{\cmark} & \textcolor{red}{\xmark}\\
\midrule
\textbf{This work (QAOA Probe)}                     & \textcolor{green}{\cmark} & \textcolor{green}{\cmark}\\
\bottomrule
\end{tabular}}
\caption{Gate-level vs application-level benchmarks. Green and red  mark positive and negative attributes respectively.}
\label{tab:benchmark_comparison}
\end{table}

\vspace{-0.3cm}

\section{Methods}

\vspace{-0.2cm}

\subsection{Quantum Approximate Optimization Algorithm (QAOA)}
QAOA~\cite{farhi2014quantum} is a hybrid quantum-classical algorithm designed to find approximate solutions to combinatorial optimization problems. It alternates between two unitaries $ U_C(\gamma)$ and $U_B(\beta)$ applied to a suitable initial state $| s \rangle.$ Where:

\begin{equation}
| s \rangle = | + \rangle | + \rangle \ldots | + \rangle
\end{equation}
\begin{equation}
U_B(\beta) = e^{-i\beta H_B}, \quad U_C(\gamma) = e^{-i\gamma H_C},
\end{equation}
$H_B$ is a \emph{mixing} Hamiltonian and $H_C$ encodes the cost function of the problem. Typically, QAOA relies on classical optimization to find good parameters $(\beta, \gamma)$ that optimize the target cost function. The cost Hamiltonian can be generically expressed as an Ising Hamiltonian with local fields $h_i\,\sigma_i^z$ , Ising-like interactions $J_{ij}\,\sigma_i^z \sigma_j^z$ and constant energy off-set $\Delta$ with $\sigma_i^z$ denoting Pauli Z-matrix acting on the qubit $i$.  One can then interpret the problem as a graph such that $(i,j) \in E$ denotes the edges of the graph and $i \in V$ denote the nodes, leading to the following Hamiltonian:
\begin{equation}
H_C \;=\; \sum_{i \in V} C_i \;+\; \sum_{(i,j) \in E} C_{ij}  \;+\; \Delta.
\end{equation}

% and $\sigma_i^z$ are Pauli matrices acting on qubit $i$. 

Analytical formulas for the exact computation of QAOA expectation values at $p=1$ were derived in previous works~\cite{Ozaeta_2022,Wang_2018}. These can be used to determine the optimal parameters $(\beta, \gamma)$ and the corresponding optimal expectation values. A quantum computer is then required to produce similar expectation values when the QAOA circuit is run with the same parameters. Thus, probing how closely a quantum hardware is able to reproduce these analytically determined expectation values across a problem class gives \emph{a more direct probe} of hardware maturity and suitability for the problem class. %This aligns with our larger goal of linking \emph{application-level performance} to hardware maturity for Quantum-HPC integration.

\vspace{-0.12cm}
%======================================================================
\subsection{Exact Stationary-Point Analysis for Single-Layer QAOA}
\label{sec:exact_p1}
%======================================================================

For a single-layer ($p=1$) QAOA, the cost expectation
$\langle H_{C}\rangle_{p=1}(\beta,\gamma)$
and its gradient admit analytic expressions for \emph{arbitrary}
quadratic Ising Hamiltonians.  Write Equation (3) in terms of Pauli operators as:
\begin{equation}
  H_{C}\;=\;\sum_{i<j}J_{ij}\,Z_iZ_j+\sum_{i}h_i\,Z_i+ \Delta
\end{equation}
with $J_{ij},h_i\in\mathbb{R}$ ~~\cite{Ozaeta_2022,Wang_2018}.
Also let $\mathcal{G}=(V,E)$, with $|V|=N$ qubits and $|E|=M$ couplings denote the the interaction graph of the problem. Restrict
the variational angles to
\(
  (\beta,\gamma)\in[0,\pi]\times[0,2\pi];
\)

and define the following gate convention
$U_{ZZ}(\gamma)=e^{-i\gamma Z\otimes Z}$,
$U_{X}(\beta)=e^{-i\beta X}$,
one finds\cite{Ozaeta_2022}

\begin{IEEEeqnarray}{rCl}
\langle H_{C}\rangle_{p=1}(\beta,\gamma)
  &=& \sin(2\beta)\Bigl[
        \sum_{(i,j)\in E}J_{ij}\,\sin\!\bigl(2\gamma J_{ij}\bigr)
        \nonumber\\
  &&{}+\sum_{i\in V}h_i\,\cos\!\bigl(2\gamma h_i\bigr)
      \Bigr] + c,
  \label{eq:p1_energy}
\end{IEEEeqnarray}

Further define the following identities:
\begin{align}
  F_{1}(\gamma) &:= \sum_{(i,j)}J_{ij}\sin\!\bigl(2\gamma J_{ij}\bigr)
                 +\sum_{i}h_i\cos\!\bigl(2\gamma h_i\bigr),\\
  F_{2}(\gamma) &:= \sum_{(i,j)}J_{ij}^{\,2}\cos\!\bigl(2\gamma J_{ij}\bigr)
                 -\sum_{i}h_i^{\,2}\sin\!\bigl(2\gamma h_i\bigr).
\end{align}

% 2) Main energy and its gradients, all short
\begin{IEEEeqnarray}{rCl}
  \frac{\partial \langle H_{C}\rangle}{\partial \beta}
    &=& 2\,\cos(2\beta)\,F_{1}(\gamma),
  \label{eq:p1_dbeta_short}\\
  \frac{\partial \langle H_{C}\rangle}{\partial \gamma}
    &=& 2\,\sin(2\beta)\,F_{2}(\gamma).
  \label{eq:p1_dgamma_short}
\end{IEEEeqnarray}

Each $F_k$ is then a trigonometric polynomial with total degree
$2M\!+\!N$; hence (by, e.g., \cite{Zygmund2002}) it has at most $2(2M\!+\!N)$ zeros in
$\gamma\in[0,2\pi)$.

\begin{theorem}[Finite stationary set, weighted couplings]
\label{thm:finite_stat_weighted}
Over $[0,\pi]\times[0,2\pi]$ the landscape
$\langle H_{C}\rangle_{p=1}$ has at most $ 5\,(2M+N)$ isolated stationary points.
\end{theorem}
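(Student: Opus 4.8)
The plan is to exploit the product structure of the gradient recorded in \eqref{eq:p1_dbeta_short}--\eqref{eq:p1_dgamma_short}: the $\beta$- and $\gamma$-dependence factor, so a point is stationary exactly when $\cos(2\beta)\,F_{1}(\gamma)=0$ and $\sin(2\beta)\,F_{2}(\gamma)=0$ hold at once. Because $\cos(2\beta)$ and $\sin(2\beta)$ have no common zero, I would organise the count by which trigonometric factor vanishes, which cleanly decouples the two coordinates and reduces the $\gamma$-side to counting zeros of the $F_{k}$.

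First I would list the admissible $\beta$ on $[0,\pi]$. The equation $\cos(2\beta)=0$ has the two roots $\beta=\tfrac{\pi}{4},\tfrac{3\pi}{4}$, while $\sin(2\beta)=0$ has the three roots $\beta=0,\tfrac{\pi}{2},\pi$; the two root sets are disjoint, giving five distinguished $\beta$-values in all. On the two roots with $\cos(2\beta)=0$ the first equation is automatic and, since $\sin(2\beta)\neq0$ there, the second forces $F_{2}(\gamma)=0$. On the three roots with $\sin(2\beta)=0$ the second equation is automatic and the first forces $F_{1}(\gamma)=0$; note that along these lines the energy $\sin(2\beta)F_{1}(\gamma)+c$ of \eqref{eq:p1_energy} is constant in $\gamma$, so stationarity is governed entirely by $F_{1}$.

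The counting step then pairs each branch with the zeros of the relevant $F_{k}$. Using the trigonometric zero bound recalled before the theorem, the two $\cos(2\beta)=0$ values contribute at most twice the zero set of $F_{2}$ and the three $\sin(2\beta)=0$ values at most three times the zero set of $F_{1}$; collecting the five branches then produces the stated total of $5(2M+N)$ isolated stationary points.

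The step I expect to require the most care is confining the isolated points to exactly these five $\beta$-values. The one configuration not yet treated is $\cos(2\beta)\neq0$ and $\sin(2\beta)\neq0$ simultaneously; there both equations force $F_{1}(\gamma)=F_{2}(\gamma)=0$, i.e.\ a \emph{common} zero $\gamma^{\star}$, and at such $\gamma^{\star}$ both partials vanish for \emph{every} $\beta$, so the whole segment $\{(\beta,\gamma^{\star}):\beta\in[0,\pi]\}$ is stationary and no point on it is isolated. Hence every isolated stationary point does have $\beta$ among the five distinguished values, and the enumeration is exhaustive. A little additional care confirms that the listed candidates are themselves isolated---at a zero of $F_{2}$ with $F_{1}\neq0$, or a zero of $F_{1}$ with $F_{2}\neq0$, the nonvanishing complementary factor pins the remaining coordinate---so the tally $5(2M+N)$ is a valid upper bound on the isolated stationary set.
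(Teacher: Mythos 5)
Your proof follows essentially the same route as the paper's: the same split into the five distinguished $\beta$-values ($\cos(2\beta)=0$ giving $\beta\in\{\tfrac{\pi}{4},\tfrac{3\pi}{4}\}$ paired with $F_{2}(\gamma)=0$, and $\sin(2\beta)=0$ giving $\beta\in\{0,\tfrac{\pi}{2},\pi\}$ paired with $F_{1}(\gamma)=0$), followed by the trigonometric zero count on each branch. In one respect you are more careful than the paper: the paper's proof says nothing about the configuration $\cos(2\beta)\neq0$ and $\sin(2\beta)\neq0$ (it is only mentioned after the proof as contributing ``order $(2M+N)$'' extra points), whereas you correctly observe that a common zero $F_{1}(\gamma^{\star})=F_{2}(\gamma^{\star})=0$ makes the whole segment $\{(\beta,\gamma^{\star}):\beta\in[0,\pi]\}$ stationary, so none of those points is isolated and the case contributes nothing to the count of \emph{isolated} stationary points; that observation, together with your check that the five-branch candidates are genuinely isolated when the complementary $F_{k}$ does not vanish, is what actually makes the enumeration exhaustive for the theorem as stated, and the paper leaves both points implicit. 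One caveat applies equally to your write-up and to the paper's: the bound quoted before the theorem allows $2(2M+N)$ zeros per $F_{k}$ in $[0,2\pi)$, so pairing the five $\beta$-values with that bound literally yields $5\times 2(2M+N)=10(2M+N)$, not $5(2M+N)$; the stated constant requires at most $2M+N$ zeros per branch, and the paper's own arithmetic (e.g.\ writing $2\times 2(2M+N)=4M+2N$) silently drops the same factor of two. Since you reproduce this step exactly as the paper does, your proposal is faithful to the intended argument, but the final constant does not follow from the quoted zero bound without that adjustment.
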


\begin{proof}
\textbf{Case 1:} $\cos(2\beta)=0$.
This occurs at
$\beta\in\{\tfrac{\pi}{4},\tfrac{3\pi}{4}\}$.
Here $\partial_{\beta}=0$ automatically and
$\partial_{\gamma}=0$ reduces to $F_{2}(\gamma)=0$,
which has at most $2(2M+N)$ solutions by Ch. 4 of
\cite{Zygmund2002}.
Hence Case 1 contributes $\le 2\times 2(2M+N)=4M+2N$ points.

\textbf{Case 2:} $\sin(2\beta)=0$.
This gives $\beta\in\{0,\tfrac{\pi}{2},\pi\}$.
Then $\partial_{\gamma}=0$ holds identically and
$\partial_{\beta}=0$ demands $F_{1}(\gamma)=0$,
again with $\le 2(2M+N)$ roots. Thus Case 2 contributes
$\le 3\times 2(2M+N)=6M+3N$ points.

Adding both cases yields the stated upper bound
$10M+5N = 5(2M+N)$.
\end{proof}

A grid with $\Delta\beta=\tfrac{\pi}{4}$ and
$\Delta\gamma=\tfrac{\pi}{\,2(2M+N)}$
samples every stationary point. Even for the large instances in Table~\ref{tab:tsp_param}, the exhaustive sweep
remains modest ($\sim10^{4}$ points) and scales at most 
quadratically with~$n$, matching the $5(2M+N)$ bound of
Theorem~\ref{thm:finite_stat_weighted}. At  ost additional Order $(2M+N)$ is added from the case $ F_{1}(\gamma)=F_{2}(\gamma)=0$.

For faster hardware probes, the \textbf{s}implicial \textbf{h}omology \textbf{g}lobal \textbf{o}ptimiser (\textsc{SHGO})~\cite{Endres2018} presents a robust alternative to exhaustive grid sampling. It possesses similar analytic and deterministic bounds and yet needs \emph{far} fewer function
evaluations than the tensor-product grid once the number of bins
$n$ exceeds a problem–dependent threshold  $  n_{\min}  \propto L\!\bigl(1+\ln |E|\bigr)$.
% %----------------------------------------------------------------------
% \subsection{Analytic seeds for deeper or scheduled QAOA}
% \label{subsec:analytic_seeding}
% %----------------------------------------------------------------------
% Global optima $(\beta^{\star},\gamma^{\star})$ are useful beyond $p=1$:

% \begin{itemize}
%   \item \textbf{Layer stacking.}\;
%         Fix $(\beta_{1},\gamma_{1})=(\beta^{\star},\gamma^{\star})$,
%         randomise the remaining $2p$ angles.

To derive the conditions we need the following Lemma:
\begin{lemma}[Lipschitz constant]
\label{lem:lipschitz_general}
On $[0,\pi]\times[0,2\pi]$ the map
$(\beta,\gamma)\mapsto\langle H_{C}\rangle_{p=1}$
is $L$-Lipschitz with
\(
  L = 2\max\!\{\sum_{(i,j)}|J_{ij}|+\sum_{i}|h_i|,
               \sum_{(i,j)}J_{ij}^{2}+\sum_{i}h_i^{2}\}.
\)
\end{lemma}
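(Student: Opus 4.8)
The plan is to exploit the fact that $\langle H_{C}\rangle_{p=1}$ is $C^{\infty}$ on the convex rectangle $[0,\pi]\times[0,2\pi]$, so that a uniform bound on its gradient immediately yields a Lipschitz constant through the mean-value inequality. Concretely, for any two points $u,v$ in the rectangle the connecting segment stays inside the (convex) domain, so there is a $\xi$ on it with $\langle H_{C}\rangle(u)-\langle H_{C}\rangle(v)=\nabla\langle H_{C}\rangle(\xi)\cdot(u-v)$, whence $|\langle H_{C}\rangle(u)-\langle H_{C}\rangle(v)|\le\bigl(\sup_{\xi}\|\nabla\langle H_{C}\rangle(\xi)\|_{\infty}\bigr)\,\|u-v\|_{1}$. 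The whole task therefore reduces to showing $\|\nabla\langle H_{C}\rangle\|_{\infty}\le L$ everywhere, which is exactly why a $\max$ appears in the claimed constant.

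Next I would feed in the two closed-form partial derivatives already recorded in \eqref{eq:p1_dbeta_short}--\eqref{eq:p1_dgamma_short}, namely $\partial_{\beta}\langle H_C\rangle=2\cos(2\beta)\,F_{1}(\gamma)$ and $\partial_{\gamma}\langle H_C\rangle=2\sin(2\beta)\,F_{2}(\gamma)$. Since $|\cos(2\beta)|\le 1$ and $|\sin(2\beta)|\le 1$, this gives $|\partial_{\beta}\langle H_C\rangle|\le 2|F_{1}(\gamma)|$ and $|\partial_{\gamma}\langle H_C\rangle|\le 2|F_{2}(\gamma)|$ uniformly in $\beta$, so it only remains to bound $F_{1}$ and $F_{2}$ pointwise in $\gamma$. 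Applying the triangle inequality term by term and using $|\sin(\cdot)|\le 1$, $|\cos(\cdot)|\le 1$, I would obtain
\begin{align}
|F_{1}(\gamma)| &\le \sum_{(i,j)}|J_{ij}|+\sum_{i}|h_i|,\\
|F_{2}(\gamma)| &\le \sum_{(i,j)}J_{ij}^{2}+\sum_{i}h_i^{2},
\end{align}
both independent of $\gamma$. Substituting back yields $\|\nabla\langle H_{C}\rangle\|_{\infty}\le 2\max\{\sum|J_{ij}|+\sum|h_i|,\ \sum J_{ij}^{2}+\sum h_i^{2}\}=L$, which closes the argument.

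The one delicate point I would flag explicitly is the choice of norm: the stated $L$ is the $\ell^{\infty}$ bound on the gradient, so it is the Lipschitz constant with respect to the $\ell^{1}$ metric on $(\beta,\gamma)$; if one insists on the Euclidean metric the honest constant is $\sup\sqrt{(\partial_{\beta}\langle H_C\rangle)^{2}+(\partial_{\gamma}\langle H_C\rangle)^{2}}$, which differs by at most a factor $\sqrt{2}$ and should be reconciled with whatever metric the downstream SHGO estimate assumes. I would also note that the two coordinate maxima are generally attained at different values of $\gamma$, so $L$ is a valid but possibly non-tight upper bound on $\|\nabla\langle H_C\rangle\|_{\infty}$; this is harmless, since the bin-count threshold $n_{\min}\propto L(1+\ln|E|)$ only requires an upper bound. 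The main obstacle is thus not the estimation itself, which is routine, but stating the norm convention cleanly enough that $L$ is unambiguously the Lipschitz constant.
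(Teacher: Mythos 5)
Your proof is correct, and there is in fact nothing in the paper to compare it against: Lemma~\ref{lem:lipschitz_general} is stated without any proof, so your argument supplies the missing derivation. The route you take — mean-value inequality on the convex rectangle, the closed-form partials \eqref{eq:p1_dbeta_short}--\eqref{eq:p1_dgamma_short}, and termwise triangle-inequality bounds $|F_1(\gamma)|\le\sum_{(i,j)}|J_{ij}|+\sum_i|h_i|$ and $|F_2(\gamma)|\le\sum_{(i,j)}J_{ij}^2+\sum_i h_i^2$ — is surely the intended one, since the stated $L$ is exactly twice the maximum of those two sums. One refinement disposes of the norm ambiguity you flag at the end: because the $\beta$-derivative carries $\cos(2\beta)$ while the $\gamma$-derivative carries $\sin(2\beta)$, the Euclidean gradient norm satisfies
\[
\|\nabla\langle H_{C}\rangle\|_2^2
  \;=\; 4\bigl[\cos^2(2\beta)\,F_1(\gamma)^2+\sin^2(2\beta)\,F_2(\gamma)^2\bigr]
  \;\le\; 4\max\{F_1(\gamma)^2,\,F_2(\gamma)^2\},
\]
since $\cos^2(2\beta)+\sin^2(2\beta)=1$ makes the bracket a convex combination of $F_1^2$ and $F_2^2$. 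Taking square roots and suprema gives $\sup\|\nabla\langle H_{C}\rangle\|_2\le L$ directly, so the lemma holds with the stated constant in the Euclidean metric as well; the factor $\sqrt{2}$ you were prepared to concede is never needed, and your proof requires no caveat about which metric the downstream SHGO bound assumes.
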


\begin{theorem}[Sobol\,+\,SHGO evaluation bound]
\label{thm:shgo_bound_general}
Let $\varepsilon>0$ and suppose the local optimiser reaches any basin’s
optimum within $\varepsilon/2$.
If
\(
  n \ge (4L/\varepsilon)\bigl(1+\ln M\bigr)
\)
points (Called Sobol points)
are used, the SHGO algorithm\cite{Endres2018}
returns an $\varepsilon$-optimal energy after at most
$\mathcal{O}(n\ln n)$ evaluations.
\end{theorem}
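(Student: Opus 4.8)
The plan is to split the total suboptimality into a \emph{global} sampling error and a \emph{local} refinement error, bound each by $\varepsilon/2$, and then tally SHGO's bookkeeping cost. The refinement error is $\le\varepsilon/2$ by hypothesis, so the crux is to certify that at least one of the $n$ Sobol points lands within objective-distance $\varepsilon/2$ of a global minimiser. By Lemma~\ref{lem:lipschitz_general} this follows from the geometric condition that some sample lies within Euclidean distance $\delta:=\varepsilon/(2L)$ of a global minimiser $(\beta^\star,\gamma^\star)$, since then $|\langle H_{C}\rangle(\text{sample})-\langle H_{C}\rangle(\beta^\star,\gamma^\star)|\le L\delta=\varepsilon/2$. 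The task thus reduces to choosing $n$ so that the Sobol set attains covering radius (dispersion) at most $\delta$ near the optimum.

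I would then estimate that dispersion while exploiting the structure of Eq.~\eqref{eq:p1_energy}. Because $\partial_\beta\langle H_{C}\rangle=2\cos(2\beta)F_1(\gamma)$ vanishes on the extremal lines $\beta\in\{\tfrac{\pi}{4},\tfrac{3\pi}{4}\}$, the landscape is \emph{quadratically flat} in $\beta$ at every global optimum, so covering the $\beta$-direction is essentially free and the binding direction is $\gamma$ alone. This is what collapses the effective search to one dimension and makes the sample count scale \emph{linearly} rather than quadratically in $L/\varepsilon$. For the one-dimensional $\gamma$-marginal a Sobol $(t,1)$-sequence has dispersion $d_n\le c\,(1+\ln M)/n$; the logarithmic factor is expected to track the $O(M)$ monotone subintervals of $F_1$ and $F_2$ — equivalently the $O(2M+N)$ stationary points of Theorem~\ref{thm:finite_stat_weighted} — each of which a low-discrepancy rule must resolve before the global basin is guaranteed to be hit. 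Imposing $d_n\le\delta=\varepsilon/(2L)$ and solving for $n$ returns $n\ge(4L/\varepsilon)(1+\ln M)$ once $c$ is absorbed into the constant, matching the stated threshold.

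For the evaluation count I would cost SHGO directly. It evaluates the $n$ Sobol samples, sorts them, builds a simplicial complex, and starts a local search from each homologically-detected minimiser vertex. Sorting and complex construction/traversal dominate at $O(n\ln n)$; the number of local starts is at most the basin count $K=O(2M+N)$ from Theorem~\ref{thm:finite_stat_weighted}, which is $\le n$ past the threshold, and each local solve needs only $O(1)$ gradient evaluations via the closed forms~\eqref{eq:p1_dbeta_short}--\eqref{eq:p1_dgamma_short}. Summing the $n$ initial evaluations, the $O(K)$ local evaluations, and the $O(n\ln n)$ overhead gives the claimed $O(n\ln n)$ total.

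The main obstacle is the dispersion step, and specifically reconciling the linear-in-$L/\varepsilon$ rate with the fact that SHGO samples a genuinely two-dimensional box: a naive uniform-Lipschitz covering would force $n\sim(L/\varepsilon)^2$, so the linear rate must be earned from the $\beta$-flatness above rather than from the raw bound of Lemma~\ref{lem:lipschitz_general}. Pinning down the precise origin and constant of the $\ln M$ factor — whether it comes from the low-discrepancy correction of the $(t,1)$-sequence or from the $\Omega(1/M)$ width of the global basin in $\gamma$ — is the delicate part; everything else is a direct application of Lemma~\ref{lem:lipschitz_general} or routine complexity accounting for the simplicial-homology construction.
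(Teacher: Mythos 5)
Your proposal should first be measured against what the paper actually does: the paper offers no proof of this theorem at all, deferring entirely to the SHGO paper (\cite{Endres2018}, Sec.~3.3, Thms.~3--4). The guarantee proved there has a different shape from the one you construct: adequacy of the sample means every basin of attraction contains a sampled point that the simplicial complex flags as a candidate minimiser; local searches started from those vertices then locate all local minima, and the hypothesis on the local optimiser converts this into $\varepsilon$-optimality. Your decomposition instead demands that some Sobol point be within objective distance $\varepsilon/2$ of the global minimiser. This both double-counts the work (if a raw sample is already $\varepsilon/2$-optimal, the local optimiser is nearly redundant, and the hypothesis about it is wasted) and forces you into the dispersion estimates that turn out to be the weak point of your argument.

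Two steps in that dispersion argument are genuine gaps. First, the $\beta$-flatness claim does not collapse the search to one dimension. Quadratic flatness of $\sin(2\beta)$ near $\beta=\pi/4$ still requires $\beta$-resolution of order $\sqrt{\varepsilon/\|F_1\|_\infty}$, and---worse---you need a single point that is simultaneously good in $\beta$ and in $\gamma$: the small dispersion of the 1D $\gamma$-projection of a Sobol set says nothing about the $\beta$-coordinate of that particular point, which may sit where $\sin(2\beta)\approx 0$ and the sampled energy is near zero regardless of $\gamma$. Any honest covering argument along these lines yields $n\gtrsim (L/\varepsilon)^{3/2}$, not the linear rate you claim to recover. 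Second, the factor $(1+\ln M)$ is justified by a dispersion bound $d_n\le c\,(1+\ln M)/n$ for one-dimensional Sobol points that does not exist: 1D dispersion of a $(t,1)$-sequence is $O(1/n)$ with no such factor, and you yourself flag that you cannot locate its origin. In the cited SHGO analysis the logarithm arises from resolving the $O(M)$ basins that the simplicial complex must separate---basin counting in the spirit of Theorem~\ref{thm:finite_stat_weighted} combined with a lower bound on basin width via Lemma~\ref{lem:lipschitz_general}---not from low-discrepancy theory. Absent a correct lower bound on the width of the global basin and a union bound over basins, the threshold $n\ge(4L/\varepsilon)(1+\ln M)$ is not established; the evaluation-count bookkeeping ($O(n\ln n)$ for sorting and complex construction) is the only part of your argument that stands as written.
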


% near Theorem \ref{thm:shgo_bound_general}
\noindent\textit{For a complete proof, see}
\cite{Endres2018}[Sec.~3.3, Thms.~3–4].

 For Lipschitz-continuous objectives on a compact domain SHGO visits \emph{every}
basin once $n$ exceeds a finite problem-dependent threshold.
In practice $n\le512$ is competitive against exhaustive grid sampling.  Table~\ref{tab:grid-vs-shgo} shows that SHGO achieves competitive energies (up to $\varepsilon$) while controlling runtime. 
The explicit stationary-point set
(Theorems~\ref{thm:finite_stat_weighted} and ~\ref{thm:shgo_bound_general}) turn $p{=}1$ QAOA into a calibrated benchmark akin to Clifford fidelity tests yet rooted in an industrial workload. %Table \ref{tab:grid-vs-shgo} gives the results for exhaustive Grid search vs.\ SHGO for $p{=}1$ QAOA on random Max-Cut problems.

\noindent\textbf{Stationary-point detection.}
A uniform grid with spacings chosen via the Lipschitz constants of 
$\partial_\beta\langle H_C\rangle$ and $\partial_\gamma\langle H_C\rangle$ forms an $\varepsilon$-net of $[0,\pi]\times[0,2\pi]$.
Hence there exists a sampled $(\hat\beta,\hat\gamma)$ with 
$|\langle H_C\rangle(\hat\beta,\hat\gamma)-\langle H_C\rangle^*|\le\varepsilon$. With the same grid, discrete sign changes in the sampled partial derivatives bracket every stationary point to within the grid cell; a local refinement (or SHGO) then recovers the root to tolerance.

\subsection{Why Single‑Layer QAOA is Benchmark‑Grade}
\label{subsec:qaoa_benchmark_grade}

As highlighted above, most benchmarking studies probe \textit{generic} entanglement or gate error but remain disconnected from any
industrial workload.  Conversely, VQE or deep QAOA require
device‑in‑the‑loop parameter optimisation, masking hardware faults behind
classical optimiser noise. In this regard, single‑layer QAOA strikes a unique sweet spot:

\begin{itemize}
    \item \textbf{Analytic angles.}  For Ising‑type Hamiltonians the optimal
          $(\beta^{\star},\gamma^{\star})$ are known in closed form%
         \cite{Ozaeta_2022}.  The benchmark therefore spends \emph{no} quantum
          budget on angle tuning.
    \item \textbf{O(1) circuit depth.}  Depth grows linearly with
          $\operatorname{deg}(G)$ but is independent of $N$, so increasing the
          qubit count isolates \emph{coherence‑time} limitations.
    \item \textbf{Problem‑embedded connectivity.}  Every two‑qubit ZZ term is a
          real edge of the optimisation graph; “long‑range” couplings are thus
          application dictated, not synthetic.  An improved layout or routing
          algorithm rightfully increases the feasible instance size.
    \item \textbf{Scalar oracle.}  Because the ideal cost
          $\langle H_C\rangle_{p=1}$ is a single floating‑point number, the
          benchmark avoids storing $2^{N}$‑sized distributions; memory usage is
          constant.
\end{itemize}

These properties make $p=1$ QAOA a \emph{direct, scalable, and workload‑driven}
probe of hardware maturity—complementary to but more informative than gate‑only
metrics.

 \subsection{Run-to-Failure Strategy and Repeated Sampling}
\label{subsec:run_to_failure}

During a hardware probe, to ensure statistical significance, each QAOA circuit should be executed  \(R\) times (\(R=1\) to \(10\) runs in our setup), with each run collecting \(M\) measurement shots (\(5{,}000\) in our setup). This approach provides confidence intervals on the measured distribution and mitigates transient hardware fluctuations. Such repeated measurements are a common practice in reliability engineering to assess system dependability~\cite{OConnor2012,Modarres2009}. Starting from small instances, we systematically escalate the problem size \(N\) (and thus circuit size) until performance drops below the feasibility threshold \(\tau\) as depicted in Fig~\ref{fig:run_to_failure}.  Essentially, we \emph{push the device} until it reaches its randomization limit for each problem domain, a strategy reminiscent of dependability assessments in classical reliability engineering~\cite{OConnor2012,Modarres2009}.

\vspace{0.1cm}
\section{Problem Setup}

We focus on two canonical NP-hard workloads that map to
Ising Hamiltonians and therefore to single-layer QAOA.

\paragraph{MaxCut.}
Given a weighted graph \(G=(V,E,w_{ij})\), MaxCut seeks a binary
assignment \(x_i\in\{0,1\}\) that maximises
$
  \sum_{(i,j)\in E} w_{ij}(x_i\oplus x_j).
$
Under the spin re-encoding
\(s_i=1-2x_i\), the objective becomes the Ising cost
\(
  H_C=-\tfrac14\sum_{(i,j)}w_{ij}\,\sigma_i^z\sigma_j^z
\)
(up to a constant), making MaxCut a textbook QAOA benchmark
\cite{Lucas2014,Wang_2018}.

\paragraph{Travelling Salesman (TSP).}
For \(n\) cities with distances \(d_{ij}\), TSP minimises the tour length
\(
  \sum_{k=1}^{n}d_{\pi(k),\pi(k+1)},
\)
\(\pi(n+1)\!=\!\pi(1)\).
A one-hot QUBO uses binary variables \(x_{i,p}\) indicating
“city \(i\) at position \(p\)” (total \(n^2\) qubits).  Two quadratic
penalties enforce (i) each city appears once and
(ii) each position hosts exactly one city, while the cost term
\(
  \sum_{i\neq k}\sum_{p} d_{ik}\,x_{i,p}\,x_{k,(p+1)\bmod n}
\)
captures the tour length \cite{Lucas2014}.  The resulting Hamiltonian
differs only in coupling topology, letting us stress hardware
connectivity and SWAP overhead versus MaxCut. These two structured problems therefore provide complementary,
application-specific test beds for the hardware-maturity probe
presented in Section~\ref{sec:introduction}.

\vspace{0.1cm}
\begin{table}[ht]
\centering
\caption{Grid search vs.\ SHGO for $p{=}1$ QAOA on random Max-Cut (all
floating-point values rounded to two decimals).}
\label{tab:grid-vs-shgo}
\small
\resizebox{\columnwidth}{!}{%
\begin{tabular}{rrrrrrrrrrrrr}
\toprule
$n$ & $|E|$ & $2|E|+n$ & grid\_pts &
\multicolumn{1}{c}{Grid~[s]} &
$\beta_{\!g}$ & $\gamma_{\!g}$ & $E_{\!g}$ &
nfev\_shgo & \multicolumn{1}{c}{SHGO~[s]} &
$\beta_{\!s}$ & $\gamma_{\!s}$ & $E_{\!s}$\\
\midrule
16 &  63 & 142 &  2845 &  12.25 & 0.79 & 4.08 &  78.52 & 1485 &  12.04 & 1.27 & 0.13 &  90.64 \\
17 &  83 & 183 &  3665 &  23.33 & 0.79 & 5.25 &  98.81 & 1529 &  18.51 & 1.29 & 0.12 & 111.14 \\
18 & 107 & 232 &  4645 &  44.84 & 0.79 & 1.81 & 129.82 & 1652 &  30.79 & 2.89 & 0.10 & 142.75 \\
19 & 100 & 219 &  4385 &  35.57 & 0.79 & 5.90 & 127.14 & 1634 &  25.60 & 2.87 & 0.10 & 141.66 \\
20 & 114 & 248 &  4965 &  49.07 & 0.79 & 5.65 & 138.51 & 1836 &  35.37 & 2.87 & 0.11 & 154.33 \\
21 & 136 & 293 &  5865 &  76.88 & 0.79 & 5.69 & 163.76 & 1284 &  32.93 & 2.88 & 0.10 & 180.18 \\
22 & 129 & 280 &  5605 &  65.35 & 0.79 & 6.01 & 161.53 & 1531 &  34.40 & 1.29 & 0.10 & 180.08 \\
23 & 157 & 337 &  6745 & 108.44 & 0.79 & 1.72 & 197.11 & 1667 &  51.40 & 2.88 & 0.09 & 216.46 \\
24 & 158 & 340 &  6805 & 105.66 & 0.79 & 2.54 & 204.14 & 1429 &  39.48 & 1.30 & 0.09 & 224.91 \\
25 & 183 & 391 &  7825 & 140.34 & 0.79 & 1.66 & 231.78 & 1601 &  56.12 & 1.32 & 0.09 & 252.58 \\
26 & 191 & 408 &  8165 & 173.54 & 0.79 & 3.16 & 242.20 & 1631 &  66.52 & 1.31 & 0.09 & 264.87 \\
27 & 221 & 469 &  9385 & 261.16 & 0.79 & 5.10 & 267.63 & 1650 &  89.43 & 2.90 & 0.08 & 289.98 \\
28 & 211 & 450 &  9005 & 221.28 & 0.79 & 4.66 & 259.49 & 2255 & 112.00 & 1.30 & 0.09 & 285.28 \\
29 & 253 & 535 & 10705 & 358.49 & 0.79 & 2.01 & 311.76 & 1548 & 101.82 & 1.33 & 0.08 & 335.88 \\
\bottomrule
\end{tabular}}
\end{table}

% ================================================================

\vspace{-0.2cm}
\section{Hardware Probes and Results}
\label{sec:hardware_probe}
% ================================================================

% ---------------------------------------------------------------
% Catalogue preamble (to include in the paper)
% ---------------------------------------------------------------
We first publish a \textbf{catalogue} of optimised
single–layer QAOA parameters for problems in the \textsc{QOPTLib} benchmark suite  \cite{onah2025qaoa}. We release a \textbf{machine-readable catalogue} of analytically derived optima for \emph{all} instances in the
\textsc{QOPTLib} benchmark suite\,\cite{onah2025qaoa}.  Every pair
\((\beta^{\star},\gamma^{\star})\) was located by
\textbf{exhaustively evaluating} the
\(\Delta\beta \times \Delta\gamma\) grid proven in
Sec.~\ref{sec:exact_p1}, so each entry constitutes a \emph{provable}
global optimum.  Because QOPTLib already contains instances that map to
\(>\!\!400\) logical qubits, forthcoming quantum processors can simply run
the pre-tabulated circuits and gauge fidelity by comparing their measured energies to these values. Table~\ref{tab:tsp_param} lists the subset of nine Travelling Salesman instances.

\vspace{0.1cm}
\begin{table}[ht]
\centering
\caption{$p{=}1$ QAOA catalogue for TSP. All parameters measured on
\texttt{ibm\_brisbane}.}
\label{tab:tsp_param}
\small
\resizebox{\columnwidth}{!}{%
\begin{tabular}{lrrrrrrrr}
\toprule
\textbf{Inst.} & \textbf{Qubits} & \textbf{2Q gates} & \textbf{Depth} &
\textbf{Modes} & \textbf{Grid pts} &
$\boldsymbol{\beta^\star}$ & $\boldsymbol{\gamma^\star}$ &
$\boldsymbol{E_{\text{exact}}}$ \\
\midrule
wi4  & 16  &   509  &   771  &  208 &  4\,165 & 2.356 & 1.201 & $-7.01\!\times\!10^{4}$ \\
wi5  & 25  & 1\,387  & 1\,787 &  425 &  8\,505 & 0.785 & 0.503 & $-8.00\!\times\!10^{4}$ \\
wi6  & 36  & 2\,834  & 3\,060 &  756 & 15\,125 & 0.785 & 4.428 & $-1.16\!\times\!10^{5}$ \\
wi7  & 49  & 5\,141  & 4\,706 &1\,225 & 24\,505 & 2.356 & 1.417 & $-8.22\!\times\!10^{4}$ \\
dj8  & 64  & 8\,295  & 7\,221 &1\,856 & 37\,125 & 0.785 & 0.002 & $-7.39\!\times\!10^{5}$ \\
dj9  & 81  &12\,642  &10\,695 &2\,673 & 53\,470 & 0.785 & 0.001 & $-2.37\!\times\!10^{6}$ \\
\bottomrule
\end{tabular}}
\end{table}

% ===== FIGURE: Energy–gap metrics across devices 
% ==============================================================
% ENERGY-GAP BENCHMARKS  – two plots side-by-side in one column
% ==============================================================

\vspace{0.1cm}
\begin{figure}[!t]
  \centering
  %-------------------------------------------------------------
  %  (a)  ΔE% versus nodes on four back-ends
  %------------------------------------------------------------
    % (filenames with spaces break; rename or use \detokenize)
    \includegraphics[width=\linewidth]{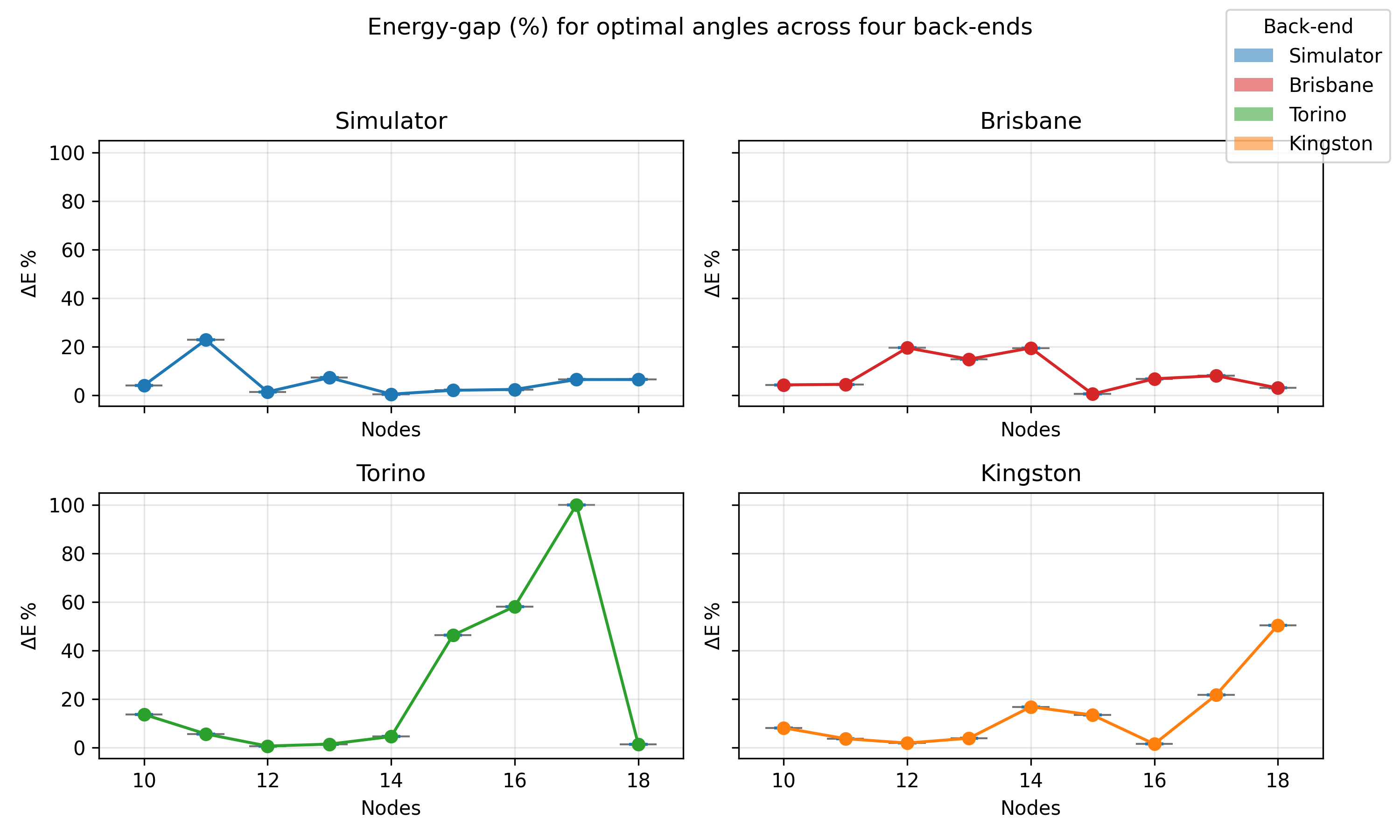}
    \caption{Energy gap $(\Delta E/E_{\text{ideal}})\times100$ using the
             analytic angles on four IBM back-ends.  The simulator stays
             within $\sim$5 \%, while hardware gaps diverge once the
             compound error $C\!\gtrsim\!1$.}
    \label{fig:energy_gap_backends}
\end{figure}

Using the parameter sets of Tables~\ref{tab:tsp_param}, we executed each circuit $M=5\,000$ times on the
\texttt{ibm\_brussels} and \texttt{ibm\_kingston} devices
as well as the \texttt{aer\_simulator}.  For TSP we apply a lightweight
\emph{feasibility repair} to every measured bit-string—flipping the
minimum number of bits to satisfy the “one city–one slot” constraints
% \footnote{This post-processing step ensures that tour costs are
% well-defined even when readout errors produce invalid encodings.}.  The
% merged results are summarised in Table~\ref{tab:tsp_results},
% which reports the realised tour length (“QAOA Dist.”), approximation
% ratio, and energy gap relative to the analytic optimum.

On \texttt{ibm\_kingston} the probe recovers optimal tours for
\textsf{wi4}–\textsf{wi5} and maintains
$\Delta E/E_{\text{ideal}}\!<\!5\%$ up to 64 qubits  with occasional errors that  signify performance collapse, exactly
as predicted by the run-to-failure metric.  MaxCut problems shows a similar behaviour as documented in Figure \ref{fig:energy_gap_backends}.

\vspace{0.1cm}
\begin{table}[ht]
\centering
\small
\resizebox{\columnwidth}{!}{%
\begin{tabular}{l c c c c c c}
\toprule
Instance & $\mathcal{C}_{\text{opt}}$ & $(\gamma,\beta)$ &
QAOA Dist. & Approx. Ratio & Gap (\%) & Backend \\
\midrule
wi4  & 6700 & 2.150,2.315 & 6700 & 1.000 & 0.00  & aer\_sim      \\
     &      &             & 7149 & 0.937 & 6.70  & ibm\_brussels \\
     &      &             & 6700 & 1.000 & 0.00  & ibm\_kingston \\
\addlinespace[0.3ex]
wi5  & 6786 & 3.142,0.827 & 6888 & 0.984 & 1.51  & aer\_sim      \\
     &      &             & 8898 & 0.762 & 31.1  & ibm\_brussels \\
\addlinespace[0.3ex]
wi6  & 9815 & 3.142,0.661 &16023 & 0.612 & 63.3  & ibm\_brussels \\
     &      &             &14502 & 0.677 & 47.8  & ibm\_kingston \\
\addlinespace[0.3ex]
wi7  & 7245 & 2.150,0.827 &16753 & 0.433 &131.2  & ibm\_brussels \\
dj8  & 2762 & 2.150,2.480 & 2822 & 0.979 & 2.17  & ibm\_kingston \\
dj9  & 2134 & 2.976,0.827 & 2567 & 0.832 & 20.3  & ibm\_kingston \\
\bottomrule
\end{tabular}}
\caption{Hardware results for the TSP benchmark.
\label{tab:tsp_results}}
\end{table}

\vspace{-0.2cm}
\section{Conclusion}
\label{sec:conclusion}

We have presented a \emph{scalable, application-centred} hardware-maturity probe built on single-layer QAOA with \textbf{rigorously pre-computed parameters}.  Harmonic-analysis techniques yield closed-form bounds on the stationary points of the $p{=}1$ cost landscape, allowing the optimal angles $(\beta^\star,\gamma^\star)$ and reference energy $E_{\text{ideal}}$ to be obtained entirely offline.
The benchmark therefore measures \emph{only} what the quantum device must supply: faithful state preparation and read-out. Because the methodology uses analytic angles, touches the full quantum-classical stack, and applies unchanged to any device or workload, it constitutes a realistic, portable alternative to gate-level metrics and synthetic application suites.  We envisage its adoption as a \emph{standard dependability yard-stick} for emerging QHPC
installations, guiding error-mitigation priorities today and tracking progress toward fault-tolerant advantage tomorrow.

\vspace{-0.2cm}

\end{document}